\newtheorem{theorem}{\bf Theorem}
\newtheorem{lemma}{\bf Lemma}
\begin{document}
\title[An asymptotic expansion of the Casoratian]{
An asymptotic expansion of the Casorati determinant
and its application to discrete integrable systems}

\author{Masato Shinjo$^{1}$,
Masashi Iwasaki$^{2}$
and
Yoshimasa Nakamura$^{1}$}

\address{$^{1}$
Graduate School of Informatics, 
Kyoto University,
Yoshidahonmachi, 
Sakyo-ku, Kyoto, 606-8501, Japan}
\address{
$^{2}$
Faculty of Life and Environmental Sciences, 
Kyoto Prefectural University,
1-5, Nakaragi-cho, Shimogamo, 
Sakyo-ku, Kyoto, 606-8522, Japan}

\ead{mshinjo@amp.i.kyoto-u.ac.jp}

\begin{abstract}
The Hankel determinant appears in the representation 
of solutions to several integrable systems.
Asymptotic expansion of the Hankel determinant thus 
plays a key role for investigating asymptotic analysis
of such integrable system. 
In this paper, 
an asymptotic expansion formula 
of a certain Casorati determinant 
is presented as an extension of the Hankel case. 
It is also shown that
an application of it to an asymptotic analysis 
of the discrete hungry Lotka-Volterra system, which
is one of basic models in mathematical biology.
\end{abstract}
\ams{39A12,
34E05,
15A15}
\submitto{\JPA}
\maketitle


\section{Introduction}
The Toda equation and the Lotka-Volterra (LV) system 
are the basic integrable systems 
which describe the current-voltage in  an electric circuit 
and a prey-predator relationship of distinct species, respectively.
The discrete Toda equation \cite{hirota} 
is a time-discretization of the Toda equation, 
and is known to be just equal to the recursion formula 
of the qd algorithm for computing eigenvalues 
of symmetric tridiagonal matrix \cite{rutishauser, henrici}
and singular values of bidiagonal one \cite{parlett}.
The discrete LV (dLV) system \cite{tsuji}, 
which is a time-discretization of the LV system, 
also has an interesting application to computing 
for bidiagonal singular values \cite{iwasaki}.
The solutions to both the discrete Toda equation 
and  the dLV system are expressed 
by using the Hankel determinant,
\begin{eqnarray}\label{hankel}
&H_{0}^{(n)}:=1,\nonumber\\ 
&H_{j}^{(n)}:=\left|
\begin{array}{cccc}
a^{(n)}          &a^{(n+1)}   &\cdots   &a^{(n+j-1)}\\
a^{(n+1)}      &a^{(n+2)}    &\cdots   &a^{(n+j)}\\
\vdots         &\vdots      &\ddots   &\vdots\\
a^{(n+j-1)}    &a^{(n+j)}     &\cdots   &a^{(n+2j-2)}
\end{array}
\right|,\qquad
j=1,2,\dots,
\end{eqnarray}
where $j$ and $n$ correspond the discrete spatial variable 
and the discrete time one, 
respectively \cite{tsuji}.
Here, 
the formal power series $f(z)=\sum_{n=0}^{\infty}a^{(n)} z^n$ 
associated with $H_{j}^{(n)}$ 
is assumed to be analytic at $z=0$ 
and meromorphic in the disk $D=\{z| |z|<\sigma\}$.
The finite or infinite number 
of poles $u_{1}^{-1},u_{2}^{-1},\dots$ of $f(z)$
are numbered such that 
$0<|u_{1}^{-1}|<|u_{2}^{-1}|<\cdots<\sigma$.
Then, 
there exists a nonzero constant $c_j$ 
independent of $n$ such that, 
for $\varrho$ satisfying $|u_{j}|>\varrho>|u_{j+1}|$,
\begin{eqnarray}\label{hankelas}
H_j^{(n)}=c_j(u_1 u_2 \cdots u_j)^n 
\left(
1+\Or \left( \left( \frac{\varrho}{|u_j|}\right)^n \right) 
\right),
\end{eqnarray}
as $n\rightarrow\infty$ \cite{henrici}.
The asymptotic expansion \eref{hankelas}
of the Hankel determinant \eref{hankel} 
as $n\rightarrow\infty$ 
enables us to analyze the discrete Toda equation 
and the dLV system asymptotically 
as in \cite{rutishauser, henrici} 
and in \cite{iwasaki}, respectively.\par
A generalization of the Hankel determinant $H_j^{(n)}$ 
is the determinant of a nonsymmetric square matrix of order $j$, 
\begin{eqnarray}\label{casoratid}
&C_{k,0}^{(n)}:=1,\nonumber\\
&C_{k,j}^{(n)}:=
\left|
\begin{array}{cccc}
a_{k}^{(n)}      &   a_{k+1}^{(n)}       &\cdots   &a_{k+j-1}^{(n)}\\
a_{k}^{(n+1)}  &  a_{k+1}^{(n+1)}     &\cdots   &a_{k+j-1}^{(n+1)}\\
\vdots         &  \vdots               &\ddots   &\vdots\\
a_{k}^{(n+j-1)}&  a_{k+1}^{(n+j-1)}  &\cdots    &a_{k+j-1}^{(n+j-1)}\\
\end{array}
\right|,\nonumber\\
&\qquad k=0,1,\dots,%
\qquad j=1,2,\dots,%
\end{eqnarray}
which is called the Casorati determinant, or the Casoratian.
According to \cite{casoratibook},
the Casoratian is a useful determinant 
in the theory of difference equations, 
which plays the role similar to 
the Wronskian in the theory of differential equations, 
and it has some applications to difference equations
in mathematical physics.
No one feels something wrong that the formal power series
$f_k(z)=\sum_{n=0}^{\infty}a_k^{(n)}z^n$ is associated with the
Casorati determinant $C_{k,j}^{(n)}$ for each $k$.
The formal power series $f_0(z),f_1(z),\dots$ differ from $f(z)$ 
in that not only the subscripts but also superscripts 
appear in the coefficients.\par
The main purpose of this paper is to present 
an asymptotic expansion of $C_{k,j}^{(n)}$ 
as $n\rightarrow \infty$.
As an application of it, 
we also give an asymptotic analysis 
for the discrete hungry LV (dhLV) system,
which is a generalization of the dLV system.
The dhLV system is a time-discretization 
of the hungry LV (hLV) system
\cite{bogoyavlensky,itoh} 
which grasps more complicated prey-predator relation,
and is shown in \cite{fukuda, fukuda2, yamamoto} to enable
us to give the $LR$ and the sifted $LR$ transformations
 for computing eigenvalues 
of a banded totally nonnegative matrix 
whose all minors are nonnegative.
\par
This paper is organized as follows. 
In Section 2,
we first observe that the entries in $C_{k,j}^{(n)}$
 can be expressed by using poles of $f_k(z)$.
We next give an asymptotic expansion 
of the Casorati determinant 
in terms of  poles of $f_k(z)$ 
as $n\rightarrow\infty$ by expanding the theorem on analyticity
for Hankel determinant in \cite{henrici}.
With the help of the resulting theorem, in Section 3,
we also clarify asymptotic behaviors of the solution 
to the dhLV system.
Finally,
we give concluding remark in Section 4.

\section{An asymptotic expansion of the Casorati determinant }
In this section, 
we first give an expression of the entries 
of the Casorati determinant  $C_{k,j}^{(n)}$ 
in terms of the poles of  the formal power series $f_k(z)$ 
associated with $C_{k,j}^{(n)}$.
Referring to the theorem on analyticity 
for Hankel determinant in \cite{henrici},
we next present an asymptotic expansion 
of the Casorati determinant $C_{k,j}^{(n)}$ 
as $n\rightarrow\infty$ by using the poles of $f_k(z)$.
We also describe an asymptotic expansion of $C_{k,j}^{(n)}$ 
as $n\rightarrow\infty$ under some restriction 
on the poles of $f_{k}(z)$
\par
Let 
$f_k(z)=\sum_{n=0}^{\infty}a_k^{(n)} z^n$, 
which is the formal power series 
associated with $C_{k,j}^{(n)}$ for $k=0,1,\dots,$
be analytic at $z=0$ and meromorphic 
in the disk $D=\{z||z|<\sigma \}$. 
Moreover, let $r_{1,k}^{-1}, r_{2,k}^{-1},\dots,$
denote the poles of $f_k(z)$ 
such that $|r_{1,k}^{-1}|<|r_{2,k}^{-1}|<\dots<\sigma$.
By extracting the principal parts in $f_k(z)$,
we derive
\begin{eqnarray}\label{fcrk}
f_k(z)&=\frac{\alpha_{1,k}}{r_{1,k}^{-1}-z}
+\frac{\alpha_{2,k}}{r_{2,k}^{-1}-z}
+\cdots
+\frac{\alpha_{j,k}}{r_{j,k}^{-1}-z}
+\sum_{n=0}^{\infty}b_k^{(n)}z^n,
\end{eqnarray}
where $\alpha_{1,k}, \alpha_{2,k}, \dots, \alpha_{j,k}$ 
are some nonzero constants and $b_k^{(n)}$,
which contains the terms 
with respect to $r_{j+1,k}^{-1},r_{j+2,k}^{-1},\dots$, satisfies 
\begin{eqnarray}\label{bmk}
|b_k^{(n)}|\leq\mu_k\rho_k^n
\end{eqnarray}
for some nonzero positive constants $\mu_k$ and $\rho_k$ with
 $|r_{j+1,k}|<\rho_k<|r_{j,k}|$.
The proof of \eref{bmk} is given in \cite{henrici} 
through the Cauchy coefficient estimate. 
We here give a lemma for an expression of $a_k^{(n)}$ 
appearing in $f_k(z)=\sum_{n=0}^{\infty}a_k^{(n)}z^n$.
\begin{lemma}\label{lem}
Let us assume that the poles 
$r_{1,k}^{-1}, r_{2,k}^{-1}, \dots, r_{j,k}^{-1}$ of $f_k(z)$ 
are not multiple. 
Then $a_k^{(n)}$ is expressed 
by using $r_{1,k},r_{2,k},\dots,r_{j,k}$ as
\begin{eqnarray}\label{acrbk}
a_{k}^{(n)}=
\sum_{\ell=1}^{j} c_{\ell,k}r_{\ell,k}^{n+k+1}+b_k^{(n)},
\end{eqnarray}
where $c_{1,k},c_{2,k},\dots,c_{j,k}$ 
are some nonzero constants.
\end{lemma}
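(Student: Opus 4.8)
The plan is to read the Taylor coefficients $a_k^{(n)}$ directly off the partial fraction decomposition \eref{fcrk}, by expanding each principal part as a geometric series about $z=0$ and comparing coefficients of $z^n$. The assumption that $f_k$ is analytic at $z=0$ guarantees that none of the poles $r_{\ell,k}^{-1}$ sits at the origin, so each $r_{\ell,k}$ is finite and nonzero and the decomposition is valid in a punctured disk; the simplicity of the poles assumed in the hypothesis ensures each principal part is first order, so no derivative terms intrude.

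First I would rewrite the $\ell$-th principal part by factoring $r_{\ell,k}^{-1}$ out of the denominator and expanding the resulting geometric series,
\begin{eqnarray*}
\frac{\alpha_{\ell,k}}{r_{\ell,k}^{-1}-z}
=\alpha_{\ell,k}r_{\ell,k}\cdot\frac{1}{1-r_{\ell,k}z}
=\sum_{n=0}^{\infty}\alpha_{\ell,k}r_{\ell,k}^{n+1}z^{n},
\end{eqnarray*}
which converges for $|z|<|r_{\ell,k}^{-1}|$ and hence in a neighbourhood of $z=0$. Summing over $\ell=1,2,\dots,j$ and appending the analytic remainder $\sum_{n}b_k^{(n)}z^n$ produces a single power series for $f_k(z)$ whose coefficient of $z^n$ is $\sum_{\ell=1}^{j}\alpha_{\ell,k}r_{\ell,k}^{n+1}+b_k^{(n)}$. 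By the uniqueness of the Taylor coefficients of $f_k$ at $z=0$, comparing this with $f_k(z)=\sum_{n}a_k^{(n)}z^n$ forces $a_k^{(n)}=\sum_{\ell=1}^{j}\alpha_{\ell,k}r_{\ell,k}^{n+1}+b_k^{(n)}$. Finally, to match the normalized exponent $n+k+1$ in \eref{acrbk}, I would set $c_{\ell,k}:=\alpha_{\ell,k}r_{\ell,k}^{-k}$, so that $\alpha_{\ell,k}r_{\ell,k}^{n+1}=c_{\ell,k}r_{\ell,k}^{n+k+1}$; since $\alpha_{\ell,k}\neq 0$ and $r_{\ell,k}\neq 0$, each $c_{\ell,k}$ is a nonzero constant, which is exactly the claimed form.

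There is no genuine obstacle in this argument, since it reduces to an elementary series expansion once \eref{fcrk} is available; the only points demanding care are that $z=0$ lies inside the common region of convergence of all the expansions and that the remainder coefficients $b_k^{(n)}$ of \eref{fcrk} carry over unchanged. The one feature worth flagging is that the extra power $r_{\ell,k}^{k}$ in the exponent is a purely cosmetic choice, invisible at the level of the single series $f_k$ and absorbed into the redefined constant $c_{\ell,k}$; its role is to align the exponent with the column index so that, when the entries $a_{k+m}^{(n+i)}$ are substituted into $C_{k,j}^{(n)}$, the dominant contributions will exhibit the separated $r_{\ell}^{n+k+m+i+1}$ structure that makes the later column-by-column analysis of the Casorati determinant transparent.
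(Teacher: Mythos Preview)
Your proof is correct and follows essentially the same approach as the paper: both expand each simple-pole term $\alpha_{\ell,k}/(r_{\ell,k}^{-1}-z)$ as a geometric series about $z=0$ and then identify coefficients. The only cosmetic difference is the order of operations---the paper first substitutes $\alpha_{\ell,k}=c_{\ell,k}r_{\ell,k}^{k}$ and then expands, whereas you expand first and define $c_{\ell,k}=\alpha_{\ell,k}r_{\ell,k}^{-k}$ afterwards---but the content is identical.
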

\begin{proof}
The key point is the replacement 
$\alpha_{1,k}=c_{1,k} r_{1,k}^{k},\alpha_{2,k}=c_{2.k} r_{2,k}^{k},\dots,\alpha_{j,k}=c_{j,k} r_{j,k}^{k}$ 
in \eref{fcrk}, namely, 
\begin{eqnarray}\label{fcr}
f_k(z)&=
\frac{c_{1,k}r_{1,k}^k}{r_{1,k}^{-1}-z}+
\frac{c_{2,k}r_{2,k}^k}{r_{2,k}^{-1}-z}+
\dots+
\frac{c_{j,k}r_{j,k}^k}{r_{j,k}^{-1}-z}+
\sum_{n=0}^{\infty}b_k^{(n)}z^n.
\end{eqnarray} 
Since each $c_{\ell,k}r_{\ell,k}^k/(r_{\ell,k}^{-1}-z)$ in \eref{fcr} 
can be regarded as the summation of geometric series,
we get
\begin{eqnarray*}
f_k(z)&=
\sum_{n=0}^{\infty} c_{1,k}r_{1,k}^{n+k+1}z^n+
\sum_{n=0}^{\infty} c_{2,k}r_{2,k}^{n+k+1}z^n+
\cdots\\
&\qquad+
\sum_{n=0}^{\infty} c_{j,k}r_{j,k}^{n+k+1}z^n+
\sum_{n=0}^{\infty}b_k^{(n)}z^n\\
&=
\sum_{n=0}^{\infty}
\left[
\left(
\sum_{\ell=1}^{j}c_{\ell,k}r_{\ell,k}^{n+k+1}
\right)
+b_k^{(n)}
\right]z^n,
\end{eqnarray*}
which implies \eref{acrbk}.
\end{proof}
Along the line similar to 
an asymptotic expansion as $n\rightarrow\infty$ 
of the Hankel determinant $H_{j}^{(n)}$ in \cite{henrici},
we have the following theorem 
for that of the Casorati determinant $C_{k,j}^{(n)}$ 
in \eref{casoratid}.
\begin{theorem}\label{casoratiasgen}
Let us assume that the poles 
$r_{1,k}^{-1},r_{2,k}^{-1},\dots,r_{j,k}^{-1}$
of $f_k(z)$ are not multiple.
Then there exists
some constant $c_{\kappa_1,\kappa_2,\dots,\kappa_j}$ 
independently of $n$ such that,
as $n\rightarrow\infty$,
\begin{eqnarray}\label{casoratiasg}
C_{k,j}^{(n)}
=&\sum_{\kappa_1,\kappa_2,\dots,\kappa_j=1,2,\dots,j}
\Bigg[c_{\kappa_1,\kappa_2,\dots,\kappa_j}(r_{\kappa_1,k}r_{\kappa_2,k+1}\dots r_{\kappa_j,k+j-1})^{n}
\nonumber\\
&
\hphantom{\sum_{\kappa_1,\kappa_2,\dots,\kappa_j=1,2,\dots,j}}
\qquad\bigg(
1+
\sum_{\ell=1}^{j}\Or\left(
\left(
\frac{\rho_{k+\ell-1}}{|r_{\kappa_\ell,k+\ell-1}|}
\right)^n
\right)
\bigg)
\Bigg]
\end{eqnarray}
where 
$\rho_{k+\ell-1}$ is some constant such that
 $|r_{j+1,k+\ell-1}|<\rho_{k+\ell-1}<|r_{j,k+\ell-1}|$.
\end{theorem}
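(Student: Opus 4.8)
The plan is to substitute the expression \eref{acrbk} of Lemma~\ref{lem} into every entry of the determinant \eref{casoratid} and then exploit the multilinearity of the determinant in its columns. Since the $q$-th column of $C_{k,j}^{(n)}$ carries the subscript $k+q-1$, its $p$-th entry is $a_{k+q-1}^{(n+p-1)}$, which by \eref{acrbk} equals $\sum_{\kappa=1}^{j} c_{\kappa,k+q-1}\,r_{\kappa,k+q-1}^{\,n+p+k+q-1}+b_{k+q-1}^{(n+p-1)}$. Writing $r_{\kappa,k+q-1}^{\,n+p+k+q-1}=\bigl(r_{\kappa,k+q-1}^{\,n+k+q}\bigr)\,r_{\kappa,k+q-1}^{\,p-1}$ exhibits this column as a linear combination of the Vandermonde vectors $\bigl(r_{\kappa,k+q-1}^{\,p-1}\bigr)_{p=1}^{j}$ with the $n$-dependent scalar coefficients $c_{\kappa,k+q-1}r_{\kappa,k+q-1}^{\,n+k+q}$, plus a remainder vector whose $p$-th component is $b_{k+q-1}^{(n+p-1)}$.

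First I would expand $C_{k,j}^{(n)}$ by multilinearity, so that each of the $j$ columns is independently replaced either by one of the $j$ scaled Vandermonde vectors (indexed by some $\kappa_q\in\{1,\dots,j\}$) or by its remainder vector. The terms in which every column is of Vandermonde type form the leading part: for a fixed choice $(\kappa_1,\dots,\kappa_j)$ the corresponding summand factors as $\bigl(\prod_{q=1}^{j}c_{\kappa_q,k+q-1}r_{\kappa_q,k+q-1}^{\,k+q}\bigr)\,(r_{\kappa_1,k}r_{\kappa_2,k+1}\cdots r_{\kappa_j,k+j-1})^{n}$ multiplied by the Vandermonde determinant of the nodes $r_{\kappa_1,k},\dots,r_{\kappa_j,k+j-1}$, namely $\prod_{1\le q<q'\le j}\!\bigl(r_{\kappa_{q'},k+q'-1}-r_{\kappa_q,k+q-1}\bigr)$. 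Collecting every $n$-independent factor into a single constant $c_{\kappa_1,\dots,\kappa_j}$ reproduces the leading sum $\sum_{\kappa}c_{\kappa_1,\dots,\kappa_j}(r_{\kappa_1,k}\cdots r_{\kappa_j,k+j-1})^{n}$ in \eref{casoratiasg}. When two nodes drawn from different columns happen to coincide the Vandermonde factor vanishes and the associated constant is simply zero, which is consistent with the mere existence claim of the theorem.

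Next I would estimate every remaining term, that is, every term containing at least one remainder column. Using the Cauchy-type bound \eref{bmk}, each entry of a remainder column with subscript $k+q-1$ is $\Or(\rho_{k+q-1}^{\,n})$, so expanding such a determinant shows that a term whose remainder columns form a nonempty set $S$ is bounded in modulus by a constant times $\prod_{q\in S}\rho_{k+q-1}^{\,n}\prod_{q\notin S}|r_{\kappa_q,k+q-1}|^{\,n}$. Comparing this with the leading summand sharing the same Vandermonde choices on $\{1,\dots,j\}\setminus S$ shows that replacing a single column $\ell$ by its remainder costs precisely a factor of order $(\rho_{k+\ell-1}/|r_{\kappa_\ell,k+\ell-1}|)^{n}$, while $|S|\ge 2$ produces products of such small factors. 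Grouping these contributions around each leading summand then yields the factor $1+\sum_{\ell=1}^{j}\Or\bigl((\rho_{k+\ell-1}/|r_{\kappa_\ell,k+\ell-1}|)^{n}\bigr)$ attached to each $(\kappa_1,\dots,\kappa_j)$, which is the asserted expansion \eref{casoratiasg}.

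The main obstacle is the bookkeeping in this last step: a single remainder column $\ell$ does not by itself fix the index $\kappa_\ell$, so the error terms must be reorganized and attributed to the appropriate leading summands, and one must verify that the higher-order terms carrying several remainder columns are genuinely dominated by the stated first-order errors. This is exactly where the simplicity of the poles $r_{1,k}^{-1},\dots,r_{j,k}^{-1}$, which guarantees both that Lemma~\ref{lem} applies and that the within-column Vandermonde nodes are distinct, together with the strict ordering $|r_{j+1,k+\ell-1}|<\rho_{k+\ell-1}<|r_{j,k+\ell-1}|$, is used to control the relevant ratios uniformly in $n$.
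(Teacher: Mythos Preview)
Your proposal is correct and follows essentially the same approach as the paper: substitute the expression of Lemma~\ref{lem} into each column, expand by multilinearity (what the paper calls the ``addition formula of determinant''), separate the pure Vandermonde-type terms $D_{k,\kappa_1,\dots,\kappa_j}^{(n)}$ from those containing at least one remainder column $\hat{D}_{k,\kappa_1,\dots,\kappa_j}^{(n)}$, and then bound the latter using \eref{bmk}. Your write-up is in fact slightly more explicit than the paper's, since you factor the constant $c_{\kappa_1,\dots,\kappa_j}$ into a Vandermonde determinant and note that coincident nodes give a vanishing constant, and you flag the attribution of the single-remainder terms to particular leading summands --- a bookkeeping point the paper passes over with ``It is easy to check all the permutations.''
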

\begin{proof}
By applying Lemma \ref{lem} and the addition formula of determinant 
to the Casorati determinant $C_{k,j}^{(n)}$,
we derive
\begin{eqnarray}\label{cdd}
C_{k,j}^{(n)}
=\sum_{\kappa_1,\kappa_2,\dots,\kappa_j=1,2,\dots,j}
D_{k,\kappa_1,\kappa_2,\dots,\kappa_j}^{(n)}
+\sum_{\kappa_1,\kappa_2,\dots,\kappa_j=1,2,\dots,j}
\hat{D}_{k,\kappa_1,\kappa_2,\dots,\kappa_j}^{(n)}
\end{eqnarray} 
where in the $1$st summation
\begin{eqnarray*}
D_{k,\kappa_1,\kappa_2,\dots,\kappa_j}^{(n)}:=
\left|
\begin{array}{cccc}
c_{\kappa_1,k}r_{\kappa_1,k}^{n+k+1}
&c_{\kappa_2,k+1}r_{\kappa_2,k+1}^{n+k+2}
&\dots
&c_{\kappa_j,k+j-1}r_{\kappa_j,k+j-1}^{n+k+j}\\
c_{\kappa_1,k}r_{\kappa_1,k}^{n+k+2}
&c_{\kappa_2,k+1}r_{\kappa_2,k+1}^{n+k+3}
&\dots
&c_{\kappa_j,k+j-1}r_{\kappa_j,k+j-1}^{n+k+j+1}\\
\vdots&\vdots&\ddots&\vdots
\\
c_{\kappa_1,k}r_{\kappa_1,k}^{n+k+j}
&c_{\kappa_2,k+1}r_{\kappa_2,k+1}^{n+k+j+1}
&\dots
&c_{\kappa_j,k+j-1}r_{\kappa_j,k+j-1}^{n+k+2j}
\end{array}
\right|,
\end{eqnarray*}
and $\hat{D}_{k,\kappa_1,\kappa_2,\dots,\kappa_j}^{(n)}$ 
in the $2$nd summation 
denotes a determinant
of the same form 
as $D_{k,\kappa_1,\kappa_2,\dots,\kappa_j}^{(n)}$ except that 
at least one of the $i$th columns 
are replaced with 
$\mathbf{b}_i=(b_{k+i-1}^{(n)},b_{k+i-1}^{(n+1)},\dots,b_{k+i-1}^{(n+j-1)})^{\top}$.
By evaluating the $1$st summation in \eref{cdd},
we get 
\begin{eqnarray}\label{dcr}
\fl
\sum_{\kappa_1,\kappa_2,\dots,\kappa_j=1,2,\dots,j}
D_{k,\kappa_1,\kappa_2,\dots,\kappa_j}^{(n)}
=\sum_{\kappa_1,\kappa_2,\dots,\kappa_j=1,2,\dots,j}
c_{\kappa_1,\kappa_2,\dots,\kappa_j} (r_{\kappa_1,k}r_{\kappa_2,k+1}\dots r_{\kappa_j,k+j-1})^n
\end{eqnarray}
where $c_{\kappa_1,\kappa_2,\dots,\kappa_j}$ is a constant
\begin{eqnarray}\label{ckappa}
c_{\kappa_1,\kappa_2,\dots,\kappa_j}=
\left|
\begin{array}{cccc}
c_{\kappa_1,k}r_{\kappa_1,k}^{k+1}&c_{\kappa_2,k+1}r_{\kappa_2,k+1}^{k+2}&\cdots&c_{\kappa_j,k+j-1}r_{\kappa_j,k+j-1}^{k+j}\\
c_{\kappa_1,k}r_{\kappa_1,k}^{k+2}&c_{\kappa_2,k+1}r_{\kappa_2,k+1}^{k+3}&\cdots&c_{\kappa_j,k+j-1}r_{\kappa_j,k+j-1}^{k+j+1}\\
\vdots&\vdots&\ddots&\vdots\\
c_{\kappa_1,k}r_{\kappa_1,k}^{k+j}&c_{\kappa_2,k+1}r_{\kappa_2,k+1}^{k+j+1}&\cdots&c_{\kappa_j,k+j-1}r_{\kappa_j,k+j-1}^{k+2j}
\end{array}
\right|.
\end{eqnarray}
In order to estimate the $2$nd summation in \eref{cdd},
we consider the case where 
$1$st column is replaced with $\mathbf{b}_1$.
For example,
it follows that
\begin{eqnarray*}
\fl
&\left|
\begin{array}{cccc}
b_k^{(n)}
&c_{\kappa_2,k+1}r_{\kappa_2,k+1}^{n+k+2}
&\dots
&c_{\kappa_j,k+j-1}r_{\kappa_j,k+j-1}^{n+k+j}\\
b_k^{(n+1)}
&c_{\kappa_2,k+1}r_{\kappa_2,k+1}^{n+k+3}
&\dots
&c_{\kappa_j,k+j-1}r_{\kappa_j,k+j-1}^{n+k+j+1}\\
\vdots&\vdots&\ddots&\vdots
\\
b_k^{(n+j-1)}
&c_{\kappa_2,k+1}r_{\kappa_2,k+1}^{n+k+j+1}
&\dots
&c_{\kappa_j,k+j-1}r_{\kappa_j,k+j-1}^{n+k+2j}
\end{array}
\right|
=\Or((\rho_k r_{\kappa_2,k+1} \dots r_{\kappa_j,k+j-1})^n).
\end{eqnarray*}
It is easy to check all the permutations 
for $\kappa_1,\kappa_2,\dots,\kappa_j$ 
in the $2$nd summation.
Thus, we can rewrite the second summation as
\begin{eqnarray}\label{sumorder}
\fl
\sum_{\kappa_1,\kappa_2,\dots,\kappa_j=1,2,\dots,j}
\sum_{\ell=1}^{j}
\Or(r_{\kappa_1,k}r_{\kappa_2,k+1}\dots
r_{\kappa_{\ell-1} ,k+\ell-2}
\rho_{k+\ell-1}
r_{\kappa_{\ell+1} ,k+\ell}\dots
r_{\kappa_j,k+j-1}).
\end{eqnarray}
Therefore, by taking account that 
$|r_{1,k}|>|r_{2,k}|>\dots>|r_{j,k}|>\rho_k>|r_{j+1,k}|$, 
from \eref{dcr}--\eref{sumorder}
we get \eref{casoratiasg}.
\end{proof}
Hereinafter,
let us consider the restricted case where
$r_{1,k}=r_1,r_{2,k}=r_2,\dots,r_{j,k}=r_j$ in $f_k(z)$.
This restriction admits the relationship of $a_0^{(n)},a_1^{(n)},\dots$,
for example, appearing in the next section
concerning an asymptotic analysis for dhLV system 
as $n\rightarrow\infty$.
Then, 
by the replacement of $r_{\ell,k}$ with $r_{\ell}$ in \eref{bmk}, 
we easily get
\begin{eqnarray}\label{acrb}
a_{k}^{(n)}=\sum_{\ell=1}^{j} c_{\ell,k}r_{\ell}^{n+k+1}+b_k^{(n)}.
\end{eqnarray}
As a specialization of Theorem \ref{casoratiasgen},
we thus derive the following theorem
for an asymptotic expansion of the Casorati determinant $C_{k,j}^{(n)}$
with the restricted $a_k^{(n)}$ as $n\rightarrow\infty$.
\begin{theorem}\label{casorati}
Let us assume that the poles 
$r_1^{-1}, r_{2}^{-1}, \dots, r_{j}^{-1}$ of $f_k(z)$ 
are not multiple.
Then there exists some constant $c_{k,j}\neq0$ 
independently of $n$ such that, 
for $|r_{j+1}|<\rho_k<|r_j|$, as $n\rightarrow\infty$,
\begin{eqnarray}\label{thm1}
C_{k,j}^{(n)}=c_{k,j}(r_{1} r_{2}\dots r_{j})^{n}
\left(
1+
\sum_{\ell=1}^{j}\Or
\left(
\left(
\frac{\rho_{k+\ell-1}}{|r_j|}
\right)^{n}
\right)
\right).
\end{eqnarray}
\end{theorem}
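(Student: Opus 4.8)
The plan is to obtain Theorem \ref{casorati} as a direct specialization of Theorem \ref{casoratiasgen}, the only genuine work being to show that the many terms of the general expansion \eref{casoratiasg} collapse onto the single scale $(r_1 r_2\cdots r_j)^n$ and that the resulting leading coefficient does not vanish. First I would set $r_{\ell,k}=r_\ell$ for every $k$ in \eref{casoratiasg} and inspect the coefficient $c_{\kappa_1,\kappa_2,\dots,\kappa_j}$, which by \eref{ckappa} is a determinant whose $\ell$-th column, after factoring out $c_{\kappa_\ell,k+\ell-1}r_{\kappa_\ell}^{\,k+\ell}$, becomes the Vandermonde column $(1,r_{\kappa_\ell},\dots,r_{\kappa_\ell}^{\,j-1})^{\top}$. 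Hence whenever two indices coincide, $\kappa_a=\kappa_b$, two columns are proportional and $c_{\kappa_1,\dots,\kappa_j}=0$; only the index tuples $(\kappa_1,\dots,\kappa_j)$ forming a permutation of $(1,2,\dots,j)$ survive.

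Next I would collect the surviving terms. For every permutation $\kappa$ the product $r_{\kappa_1}r_{\kappa_2}\cdots r_{\kappa_j}$ equals $r_1 r_2\cdots r_j$, so each surviving summand carries the common factor $(r_1 r_2\cdots r_j)^n$; factoring it out defines $c_{k,j}:=\sum_{\kappa}c_{\kappa_1,\dots,\kappa_j}$, the sum running over permutations. For the error, note that $\kappa_\ell\in\{1,\dots,j\}$ forces $|r_{\kappa_\ell}|\ge|r_j|$, whence $\rho_{k+\ell-1}/|r_{\kappa_\ell}|\le\rho_{k+\ell-1}/|r_j|<1$ and each relative error $\Or((\rho_{k+\ell-1}/|r_{\kappa_\ell}|)^n)$ is dominated by $\Or((\rho_{k+\ell-1}/|r_j|)^n)$. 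Summing the finitely many permutation contributions and pulling the common scale out then yields \eref{thm1} with the stated uniform error $\sum_{\ell=1}^{j}\Or((\rho_{k+\ell-1}/|r_j|)^n)$, \emph{provided} $c_{k,j}\neq0$ so that the expansion can be normalized to the relative form $1+\sum_{\ell}\Or(\cdots)$.

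The main obstacle is exactly this nonvanishing of $c_{k,j}$. I would establish it by returning to the leading part of $C_{k,j}^{(n)}$, i.e.\ the determinant appearing in \eref{dcr} before the $b_k^{(n)}$ corrections, and factoring its $(p,q)$ entry $\sum_{\ell}c_{\ell,k+q-1}r_\ell^{\,n+p+k+q-1}$ as a matrix product $AB$ with $A_{p\ell}=r_\ell^{\,p}$ and $B_{\ell q}=c_{\ell,k+q-1}r_\ell^{\,n+k+q-1}$. Both factors are $j\times j$, so that $c_{k,j}$ is, up to the explicit power $(r_1\cdots r_j)^{\,k+1}$, the product of the nonzero Vandermonde determinant $\det[r_\ell^{\,p-1}]$ (the poles being distinct and nonzero) and the coefficient determinant $\det[c_{\ell,k+q-1}r_\ell^{\,q-1}]$. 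Since the Vandermonde factor is manifestly nonzero, the genuine point is the nonvanishing of this coefficient determinant; I expect to argue that under the simple-pole hypothesis it is nonzero, equivalently that the dominant growth of $C_{k,j}^{(n)}$ is not accidentally cancelled, and this is the step requiring care rather than the routine bookkeeping above.
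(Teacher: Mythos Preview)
Your plan is the paper's own route: specialize Theorem~\ref{casoratiasgen} by setting $r_{\ell,k}=r_\ell$, observe via the Vandermonde structure of \eref{ckappa} that $c_{\kappa_1,\dots,\kappa_j}$ vanishes whenever two indices coincide so that only permutations of $(1,\dots,j)$ survive, factor out the common scale $(r_1\cdots r_j)^n$, and bound each relative error $(\rho_{k+\ell-1}/|r_{\kappa_\ell}|)^n$ by $(\rho_{k+\ell-1}/|r_j|)^n$. You are in fact more explicit than the paper on every one of these steps; the paper's displayed ``simplification'' \eref{dcr2} even pulls the $\kappa$-dependent factors $c_{\kappa_\ell,k+\ell-1}$ outside the $\kappa$-sum, which only makes sense once your permutation observation is in hand.

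On the point you flag as the genuine obstacle, the nonvanishing of $c_{k,j}$: the paper does \emph{not} prove this. Its proof simply passes from \eref{dcr2} and \eref{sumorder2} to \eref{thm1} without comment, so the claim $c_{k,j}\neq 0$ in the statement is asserted rather than established. Your $AB$ factorization correctly reduces the question to the nonvanishing of $\det[c_{\ell,k+q-1}r_\ell^{\,q-1}]_{\ell,q}$, and you are right that the simple-pole hypothesis alone guarantees only that each $c_{\ell,k}\neq 0$, not that this determinant is nonzero. So your level of justification already meets the paper's; if you want to go further you would need an additional hypothesis on the residues (or on the relation among the $f_k$, e.g.\ \eref{aconditiond} in the application) rather than more bookkeeping.
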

\begin{proof}
The replacement $r_{1,k}=r_{1},r_{2,k}=r_{2},\dots,r_{j,k}=r_{j}$ in \eref{ckappa} gives
\begin{eqnarray}
c_{\kappa_1,\kappa_2,\dots,\kappa_j}=
\left|
\begin{array}{cccc}
c_{\kappa_1,k}r_{\kappa_1}^{k+1}&c_{\kappa_2,k+1}r_{\kappa_2}^{k+2}&\cdots&c_{\kappa_j,k+j-1}r_{\kappa_j}^{k+j}\\
c_{\kappa_1,k}r_{\kappa_1}^{k+2}&c_{\kappa_2,k+1}r_{\kappa_2}^{k+3}&\cdots&c_{\kappa_j,k+j-1}r_{\kappa_j}^{k+j+1}\\
\vdots&\vdots&\ddots&\vdots\\
c_{\kappa_1,k}r_{\kappa_1}^{k+j}&c_{\kappa_2,k+1}r_{\kappa_2}^{k+j+1}&\cdots&c_{\kappa_j,k+j-1}r_{\kappa_j}^{k+2j}
\end{array}
\right|.
\end{eqnarray}
So, we simplify \eref{dcr} as
\begin{eqnarray}\label{dcr2}
&\sum_{\kappa_1,\kappa_2,\dots,\kappa_j=1,2,\dots,j}
D_{k,\kappa_1,\kappa_2,\dots,\kappa_j}^{(n)}\\
&\quad=
c_{\kappa_1,k}c_{\kappa_2,k+1}\dots c_{\kappa_j,k+j-1}
(r_{1}r_{2}\dots r_{j})^n\nonumber\\
&\quad\quad\quad\times\sum_{\kappa_1,\kappa_2,\dots,\kappa_j=1,2,\dots,j}
\left|
\begin{array}{cccc}
r_{\kappa_1}^{k+1}&r_{\kappa_2}^{k+2}
&\dots&r_{\kappa_j}^{k+j}\\
r_{\kappa_1}^{k+2}&r_{\kappa_2}^{k+3}
&\dots&r_{\kappa_j}^{k+j+1}\\
\vdots&\vdots&\ddots&\vdots\\
r_{\kappa_1}^{k+j}&r_{\kappa_2}^{k+j+1}
&\dots&r_{\kappa_j}^{k+2j}\\
\end{array}
\right|
\end{eqnarray}
It is of significance to note that 
there exists a constant $\rho_{k}$, 
which is not equal to one in Theorem \ref{casoratiasgen},
such that $|r_{j+1}|<\rho_k<|r_j|$.
This is because $\rho_{k}$ and $\rho_{k+1}$ 
does not always satisfy $\rho_{k}=\rho_{k+1}$ 
even if $r_{1,k}=r_{1},r_{2,k}=r_{2},\dots,r_{j,k}=r_{j}$
in Theorem \ref{casoratiasgen}.
Thus, \eref{sumorder} becomes
\begin{eqnarray}\label{sumorder2}
\sum_{\ell=1}^{j}
\Or
\left(
\left(
r_1 r_2\dots r_{j-1}
\rho_{k+\ell-1}
\right)^n
\right).
\end{eqnarray}
Therefore, from \eref{dcr2} and \eref{sumorder2} we have \eref{thm1}.
\end{proof}
Theorem \ref{casoratiasgen} is expected to be useful 
for asymptotic analysis of dynamical system 
whose solution are expressed 
in terms of the Casorati determinant $C_{k,j}^{(n)}$.
Theorem \ref{casorati} also covers an asymptotic expansion 
of the Hankel determinant.

\section{Asymptotic analysis for
 the discrete hungry Lotka-Volterra system}
In this section, 
we first explain that the solution to the dhLV system 
is written in terms of the Casorati determinant
By using Theorem \ref{casorati},
we next clarify an asymptotic behavior of the dhLV variables 
as $n\rightarrow\infty$.\par
The hLV system is known 
as one of the mathematical prey-predator models 
which is an extension of the LV system.
The hLV system differs from the simple LV system 
in that more than one food 
and predator exists for each species.
A skillful discretization of the hLV system enable us 
to give the dhLV system 
with hungry degree $M$,
\begin{eqnarray}\label{dhlv}
&\frac{u_{k}^{(n+1)}}{u_{k}^{(n)}}=
\prod_{j=1}^{M}
\frac{
\delta^{(n)}+u_{k+j}^{(n)}
}{
\delta^{(n+1)}+u_{k-j}^{(n+1)}
},\nonumber\\
&\qquad k=0,1,\dots,(M+1)m-M-1,
\qquad n=0,1,\dots,\\
& u_{-M}^{(n)}:=0,
u_{-M+1}^{(n)}:=0,
\dots,
u_{-1}^{(n)}:=0,
\quad\nonumber\\
&u_{(M+1)m-M}^{(n)}:=0,
u_{(M+1)m-M+1}^{(n)}:=0,
\dots,
u_{(M+1)m-1}^{(n)}:=0,
\nonumber
\end{eqnarray}
where $u_k^{(n)}$ and $\delta^{(n)}$ 
denote the number of the $k$th species 
and the discretization parameter 
at the discrete time $n$, respectively.
The dhLV system \eref{dhlv} with $M=1$ coincides 
with the simple LV system.
It is shown in \cite{tsujimoto} that 
the dhLV system \eref{dhlv} with $M=2$ is derived 
from starting the discussion on three-term recurrence 
$T_{k+1}(x)=xT_k(x)-v_k^{(n)}T_{k-2}(x)$.
Similarly, it is easy to get the arbitrary integer case of $M$ 
through considering the three-term recurrence 
$T_{k+1}(x)=xT_k(x)-v_k^{(n)}T_{k-M}(x)$.
At a first glance, 
the dhLV system \eref{dhlv} seems to differ from in \cite{fukuda} 
from the viewpoint of the position of discrete parameter $\delta^{(n)}$,
but they are essentially equivalent to each other.\par 
Let us introduce the auxiliary variable 
\begin{eqnarray}
&v_k^{(n)}:=u_{k-M}^{(n)}\prod_{j=1}^{M}(\delta^{(n)}+u_{k-M-j}^{(n)}),\nonumber\\
&\qquad k=M,M+1,\dots,(M+1)m-1 .\label{vu}
\end{eqnarray}
Then, $v_k^{(n)}$ can be expressed 
according to \cite{tsujimoto} as 
\begin{eqnarray}
v_k^{(n)}&=
\frac{\tau_{k+1}^{(n)}\tau_{k-M}^{(n)}}
{\tau_{k}^{(n)}\tau_{k-M+1}^{(n)}},
\qquad k=M,M+1\dots,(M+1)m-1,
\label{vtttt}
\end{eqnarray}
by using the tau-function $\tau_k^{(n)}$ with the following determinant
representation in the cases where $k=j(M+1)$ and $k=i+j(M+1)$, 
\begin{eqnarray*}
&\tau_0^{(n)}:=1,\qquad
\tau_{j(M+1)}^{(n)}:=\left|
\begin{array}{ccccc}
\tau_{0,M}^{(n)}&\tau_{1,M}^{(n)}&\dots&\tau_{j-1,M}^{(n)}\\
\tau_{M,M}^{(n)}&\tau_{M+1,M}^{(n)}&\dots&\tau_{M+j-1,M}^{(n)}\\
\vdots&\vdots&\ddots&\vdots\\
\tau_{(j-1)M,M}^{(n)}&\tau_{(j-1)M+1,M}^{(n)}&\dots&\tau_{(j-1)(M+1)-1,M}^{(n)}\\
\tau_{jM,i-1}^{(n)}&\tau_{jM+1,i-1}^{(n)}&\dots&\tau_{j(M+1)-1,i-1}^{(n)}
\end{array}
\right|,\\
&\tau_{i+j(M+1)}^{(n)}:=\left|
\begin{array}{ccccc}
\tau_{0,M}^{(n)}&\tau_{1,M}^{(n)}&\dots&\tau_{j-1,M}^{(n)}&\tau_{j,i-1}^{(n)}\\
\tau_{M,M}^{(n)}&\tau_{M+1,M}^{(n)}&\dots&\tau_{M+j-1,M}^{(n)}&\tau_{M+j,i-1}^{(n)}\\
\vdots&\vdots&\ddots&\vdots&\vdots\\
\tau_{(j-1)M,M}^{(n)}&\tau_{(j-1)M+1,M}^{(n)}&\dots&\tau_{(j-1)(M+1)-1,M}^{(n)}&\tau_{(j-1)(M+1),i-1}^{(n)}\\
\tau_{jM,i-1}^{(n)}&\tau_{jM+1,i-1}^{(n)}&\dots&\tau_{j(M+1)-1,i-1}^{(n)}&\tau_{j(M+1),i-1}^{(n)}
\end{array}
\right|,\\
&\qquad i=1,2,\dots M,
\end{eqnarray*}
where
\begin{eqnarray*}
&\tau_{\ell,s}^{(n)}
:=\left(
\begin{array}{cccc}
a_{\ell}^{(n)}\\
&a_{\ell+1}^{(n)}\\
&&\ddots\\
&&&a_{\ell+s}^{(n)}
\end{array}
\right)\in R^{(s+1)\times (s+1)},
\end{eqnarray*}
with the relationship concerning the evolution from $n$ to $n+1$,
\begin{eqnarray}\label{aconditiond}
a_{k}^{(n+1)}=a_{k+M}^{(n)}-(\delta^{(n)})^{M+1} a_{k}^{(n)}.
\end{eqnarray}
It is easy to check that \eref{aconditiond} admits 
the assumption
$r_{1,k}=r_1,r_{2,k}=r_2,\dots,r_{j,k}=r_j$ in $f_k(z)$.
The $1$st, the $2$nd, $\dots$, the $(j-1)$th raw and column blocks 
in $\tau_{i+j(M+1)}^{(n)}$ are $M$-by-$M$ matrices,
but the $j$th raw and column blocks in it are $(i-1)$-by-$(i-1)$ matrices.
The following lemma gives the representation of $v_k^{(n)}$ 
in terms of $C_{k,j}^{(n)}$ appearing in Section 1.
\begin{lemma}\label{vgggg}
The auxiliary variable $v_k^{(n)}$ is expressed as
\begin{eqnarray}
&v_{i+j(M+1)}^{(n)}=
\frac{\displaystyle C_{i,j+1}^{(n)}C_{i+1,j-1}^{(n)}}
{\displaystyle C_{i,j}^{(n)}C_{i+1,j}^{(n)}},\nonumber\\
&\qquad i=0,1,\dots,M-1,
\qquad
 j=1,2,\dots,m-1,\label{imj}\\
&v_{M+j(M+1)}^{(n)}=
\frac{\displaystyle C_{M,j+1}^{(n)}C_{0,j}^{(n)}}
{\displaystyle C_{M,j}^{(n)}C_{0,j+1}^{(n)}},
\qquad j=0,1,\dots.m-1.\label{mmj}
\end{eqnarray}
\end{lemma}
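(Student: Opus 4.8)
The plan is to reduce the identity to a single factorisation of the tau-function and then substitute it into the determinantal ratio \eref{vtttt}. The key claim I would establish first is that, for $0\le i\le M$,
\begin{eqnarray*}
\tau_{i+j(M+1)}^{(n)}
=\varepsilon_{i,j}\prod_{t=0}^{i-1}C_{t,j+1}^{(n)}\prod_{t=i}^{M}C_{t,j}^{(n)},
\end{eqnarray*}
where $\varepsilon_{i,j}\in\{+1,-1\}$ is a reordering sign and the empty product is read as $1$, so that $i=0$ recovers the pure case $\tau_{j(M+1)}^{(n)}=\varepsilon_{0,j}\prod_{t=0}^{M}C_{t,j}^{(n)}$. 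Granting this, Lemma \ref{vgggg} becomes a bookkeeping exercise.

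To prove the factorisation I would exploit that every block $\tau_{\ell,s}^{(n)}$ is a diagonal matrix, so in the determinant defining $\tau_{i+j(M+1)}^{(n)}$ an entry survives only when its row and its column occupy the same diagonal position $t$ within their block; I call $t$ the level. Permuting rows and columns to group entries of a common level renders the matrix block diagonal, whence its determinant factorises over the levels $t=0,1,\dots,M$. The level-$t$ factor is the determinant whose $(r,c)$ entry is $a_{rM+c+t}^{(n)}$, the row index $r$ stepping the first subscript by $M$ and the column index $c$ by $1$. On this factor I would apply the evolution relation \eref{aconditiond} in the form $a_{k+M}^{(n)}=a_{k}^{(n+1)}+(\delta^{(n)})^{M+1}a_{k}^{(n)}$ as successive elementary row operations: subtracting multiples of the preceding rows trades the $M$-step spatial stride in $r$ for a unit time stride, so that the $(r,c)$ entry becomes $a_{c+t}^{(n+r)}$. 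Because the leading coefficient in each replacement is $1$, these operations are unipotent and leave the determinant unchanged, so the level-$t$ factor is exactly a Casoratian of the form \eref{casoratid}. Counting the block rows and columns that contain each level then fixes the sizes: the $i$ levels $t<i$ occur in all $j+1$ block rows and columns and give $C_{t,j+1}^{(n)}$, while the levels $t\ge i$ occur only in the $j$ bulk block rows and columns and give $C_{t,j}^{(n)}$, which is the displayed formula.

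With this in hand I would substitute the four tau-functions of \eref{vtttt}. For $i+j(M+1)$ with $0\le i\le M-1$ the numerator factors $\tau_{k+1}^{(n)}\tau_{k-M}^{(n)}$ carry the index/level data $(i+1,j)$ and $(i+1,j-1)$, while the denominator factors $\tau_{k}^{(n)}\tau_{k-M+1}^{(n)}$ carry $(i,j)$ and $(i+2,j-1)$; collecting the exponent of each $C_{t,s}^{(n)}$, every factor telescopes away except $C_{i,j+1}^{(n)}$ and $C_{i+1,j-1}^{(n)}$ surviving in the numerator and $C_{i,j}^{(n)}$ and $C_{i+1,j}^{(n)}$ in the denominator, which is \eref{imj}. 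For $i=M$ the shift $k\mapsto k+1$ wraps the index from $M$ back to $0$ at the next level, so the four contributions are indexed $(0,j+1),(0,j)$ over $(M,j),(1,j)$, and the same cancellation yields \eref{mmj}.

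The delicate part is the factorisation step rather than the final cancellation, which is purely combinatorial. Two points need care: first, the block sizes are genuinely non-uniform, because the bordering blocks $\tau_{\cdot,i-1}^{(n)}$ are shorter than the bulk blocks $\tau_{\cdot,M}^{(n)}$, so the level-grouping permutation and the resulting determinant sizes must be tracked honestly; second, one must confirm that the sign $\varepsilon_{i,j}$ enters the numerator and the denominator of \eref{vtttt} equally and hence cancels, which is forced by the positivity of $v_k^{(n)}$ but should be checked directly from the permutation. Managing the index wrap-around that produces the separate formula \eref{mmj} is the remaining bookkeeping obstacle.
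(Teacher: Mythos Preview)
Your proposal is correct and follows essentially the same route as the paper: the paper introduces the intermediate determinant $G_{i,j}^{(n)}$ (your ``level-$t$ factor'' before the row operations), factors $\tau_{i+j(M+1)}^{(n)}=\bigl(\prod_{\ell<i}G_{\ell,j+1}^{(n)}\bigr)\bigl(\prod_{\ell\ge i}G_{\ell,j}^{(n)}\bigr)$ by the same block-diagonal permutation you describe, substitutes into \eref{vtttt} to obtain \eref{imj}--\eref{mmj} with $G$ in place of $C$, and only then shows $G_{i,j}^{(n)}=C_{i,j}^{(n)}$ via exactly the row reduction using \eref{aconditiond} that you outline. Your sign $\varepsilon_{i,j}$ is in fact $+1$, because the block-diagonalising permutation is applied simultaneously to rows and columns, which is why the paper never tracks it.
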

\begin{proof}
Let us introduce a new determinant of a square matrix of order $j$,
\begin{eqnarray}\label{detg}
&G_{i,0}^{(n)}:=1, \nonumber\\
&G_{i,j}^{(n)}:=\left|
\begin{array}{cccc}
a_{i}^{(n)}&a_{i+1}^{(n)}&\cdots &a_{i+j-1}^{(n)}\\
a_{i+M}^{(n)}&a_{i+M+1}^{(n)}&\cdots &a_{i+M+j-1}^{(n)}\\
\vdots&\vdots&\ddots&\vdots\\
a_{i+M(j-1)}^{(n)}&a_{i+M(j-1)+1}^{(n)}&\cdots &a_{i+(M+1)(j-1)}^{(n)}\\
\end{array}
\right|,\nonumber\\
&\qquad 
j=1,2,\dots.%
\end{eqnarray}
We give an explanation that 
$\tau_{j(M+1)}^{(n)}$ can be transformed 
into the block diagonal determinant 
with respect to $G_{0,j}^{(n)},G_{1,j}^{(n)},\dots,G_{M,j}^{(n)}$.
By interchanging 
the $2$nd, the $3$rd, $\dots$ the $j$th rows and columns 
with the $[1+(M+1)]$th, the $[1+2(M+1)]$th, 
$\dots$, the $[1+(j-1)(M+1)]$th ones in $\tau_{j(M+1)}^{(n)}$,
we observe that the same form of $G_{0,j}^{(n)}$ 
appears in the $1$st diagonal block of $\tau_{j(M+1)}^{(n)}$.
The entries in the $1$st, the $2$nd, 
$\dots$, the $j$th rows and columns 
in $\tau_{j(M+1)}^{(n)}$ are simultaneously all $0$,
except for those in the diagonal block part.
The permutations similar to the above 
makes the forms of $G_{1,j}^{(n)},G_{2,j}^{(n)}, \dots, G_{M,j}^{(n)}$ 
as the $2$nd, the $3$rd, $\dots$, the $(M+1)$th blocks 
in $\tau_{j(M+1)}^{(n)}$.
Thus, $\tau_{j(M+1)}^{(n)}$ can be expressed 
in terms of $G_{0,j}^{(n)},G_{1,j}^{(n)},\dots,G_{M,j}^{(n)}$ as 
\begin{eqnarray}\label{(M+1)p}
\tau_{j(M+1)}^{(n)}=\prod_{\ell=0}^{M}G_{\ell,j}^{(n)}.
\end{eqnarray}
Similarly, 
$\tau_{i+j(M+1)}^{(n)}$ can be transformed 
into the determinant of the block diagonal matrix
whose $M+1$ blocks are the notices 
in $G_{0,j+1}^{(n)},G_{1,j+1}^{(n)},\dots,G_{i-1,j+1}^{(n)}$
and
$G_{i,j}^{(n)},G_{i+1,j}^{(n)},\dots,G_{M,j}^{(n)}$.
Thus, it follows that
\begin{eqnarray}\label{(M+1)p+q}
\tau_{i+j(M+1)}^{(n)}=\left(\prod_{\ell=0}^{i-1}G_{\ell,j+1}^{(n)}\right)
\left(\prod_{\ell=i}^{M}G_{\ell,j}^{(n)}\right).
\end{eqnarray}
The cases 
where $k=i+j(M+1)$ and $k=M+j(M+1)$ in \eref{vtttt} become
\begin{eqnarray*}
&v_{i+j(M+1)}^{(n)}
=\frac{\tau_{i+j(M+1)+1}^{(n)}\tau_{i+(j-1)(M+1)+1}^{(n)}
}{\tau_{i+j(M+1)}^{(n)}\tau_{i+(j-1)(M+1)+2}^{(n)}},\\
&v_{M+j(M+1)}^{(n)}=
\frac{\tau_{(j+1)(M+1)}^{(n)}\tau_{j(M+1)}^{(n)}}{\tau_{M+j(M+1)}^{(n)}\tau_{j(M+1)+1}^{(n)}}.
\end{eqnarray*}
Consequently, by combining them with \eref{(M+1)p} and \eref{(M+1)p+q},
we get 
\begin{eqnarray}
&v_{i+j(M+1)}^{(n)}=
\frac{\displaystyle G_{i,j+1}^{(n)}G_{i+1,j-1}^{(n)}}
{\displaystyle G_{i,j}^{(n)}G_{i+1,j}^{(n)}},
\qquad i=0,1,\dots,M-1,\label{vgggg1}\\
&v_{M+j(M+1)}^{(n)}=
\frac{\displaystyle G_{M,j+1}^{(n)}G_{0,j}^{(n)}}
{\displaystyle G_{M,j}^{(n)}G_{0,j+1}^{(n)}}.\label{vgggg2}
\end{eqnarray}
\par
The entries in the $j$th row of $G_{i,j}^{(n)}$ 
are rewritten as the linear combination
$a_{i+M(j-1)+\ell}^{(n)}
=a_{i+M(j-2)+\ell}^{(n+1)}+(\delta^{(n)})^{M+1}a_{i+M(j-2)+\ell}^{(n)}$ 
for $\ell=0,1,\dots,j-1$.
By multiplying the $(j-1)$th row by $-(\delta^{(n)})^{M+1}$ and 
by adding it to the $j$th,
we get the row $(a_{i+M(j-2)}^{(n+1)},a_{i+M(j-2)+1}^{(n+1)},\dots,a_{i+(M+1)(j-2)+1}^{(n+1)})$
as the new $j$th.
Similarly, 
for the $(j-1)$th, the $(j-2)$th, $\dots$, the $2$nd rows,
it follows that
\begin{eqnarray*}
G_{i,j}^{(n)}=
\left|
\begin{array}{cccc}
a_{i}^{(n)}&a_{i+1}^{(n)}&\cdots &a_{i+j-1}^{(n)}\\
a_{i}^{(n+1)}&a_{i+1}^{(n+1)}&\cdots &a_{i+j-1}^{(n+1)}\\
\vdots&\vdots&\ddots&\vdots\\
a_{i+M(j-2)}^{(n+1)}&a_{i+M(j-2)+1}^{(n+1)}&\cdots&a_{i+(M+1)(j-2)+1}^{(n+1)}\\
\end{array}
\right|.
\end{eqnarray*}
It is here worth noting that the subscript $M$ 
can be regarded as be transformed into the superscript $1$.
Thus, $G_{i,j}^{(n)}$ in \eref{detg}
after all is equal to the Casorati determinant $C_{i,j}^{(n)}$ in \eref{casoratid}.
Therefore, 
by taking account of it in \eref{vgggg1} and \eref{vgggg2}, 
we have \eref{imj} and \eref{mmj}.
\end{proof}
Lemma \ref{vgggg} with Theorem \ref{casorati} leads to the following theorem 
for an asymptotic behavior of $v_k^{(n)}$ as $n\rightarrow \infty$.
\begin{theorem}\label{vcr}
The auxiliary variables $v_{j(M+1)}^{(n)}$, $v_{1+j(M+1)}^{(n)}$, $\dots$, $v_{M-1+j(M+1)}^{(n)}$ converge to $0$,
and $v_{M+j(M+1)}^{(n)}$ goes to some constant $\hat{c}_{j}$ as $n\rightarrow \infty$.
\end{theorem}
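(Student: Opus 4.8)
The plan is to substitute the asymptotic expansion of Theorem~\ref{casorati} into the two determinantal ratios for $v_k^{(n)}$ provided by Lemma~\ref{vgggg}, and then to retain only the dominant exponential factors $(r_1 r_2 \cdots r_j)^n$. The remaining factors $1 + \sum_{\ell}\Or((\rho_{k+\ell-1}/|r_j|)^n)$ each tend to $1$, and since only finitely many of them occur, under multiplication and division they coalesce into a single $1+o(1)$ that cannot affect the limits.

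First I would treat \eref{imj}, the cases $i = 0,1,\dots,M-1$. Writing $C_{k,j}^{(n)} = c_{k,j}(r_1\cdots r_j)^n(1+o(1))$, the numerator $C_{i,j+1}^{(n)}C_{i+1,j-1}^{(n)}$ carries the exponential factor $(r_1\cdots r_{j+1})^n (r_1\cdots r_{j-1})^n = (r_1\cdots r_{j-1})^{2n} r_j^{n} r_{j+1}^{n}$, whereas the denominator $C_{i,j}^{(n)}C_{i+1,j}^{(n)}$ carries $(r_1\cdots r_j)^{2n} = (r_1\cdots r_{j-1})^{2n} r_j^{2n}$. Dividing leaves exactly $(r_{j+1}/r_j)^n$. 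Since the poles satisfy $|r_1^{-1}|<|r_2^{-1}|<\cdots$, equivalently $|r_j|>|r_{j+1}|$, we have $|r_{j+1}/r_j|<1$, and hence $v_{i+j(M+1)}^{(n)}\to 0$ as $n\to\infty$.

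Next I would treat \eref{mmj}. Here the numerator $C_{M,j+1}^{(n)}C_{0,j}^{(n)}$ and the denominator $C_{M,j}^{(n)}C_{0,j+1}^{(n)}$ carry the identical exponential factor $(r_1\cdots r_{j+1})^n(r_1\cdots r_j)^n$, so all powers of the $r_\ell$ cancel completely. What survives is the ratio of the leading constants times $1+o(1)$, giving
\begin{eqnarray*}
v_{M+j(M+1)}^{(n)} \longrightarrow \frac{c_{M,j+1}\,c_{0,j}}{c_{M,j}\,c_{0,j+1}} =: \hat{c}_j,
\end{eqnarray*}
a finite constant because each $c_{k,j}\neq 0$ by Theorem~\ref{casorati}. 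The degenerate values $j=1$ in \eref{imj} and $j=0$ in \eref{mmj}, where a determinant of order $0$ equals $1$, are covered by the very same computation upon setting $c_{k,0}=1$.

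The only genuine prerequisite is that Theorem~\ref{casorati} applies at all, i.e. that the sequences $a_k^{(n)}$ share the common poles $r_1,\dots,r_j$; this is precisely the property asserted just below \eref{aconditiond}, so I would invoke it rather than re-derive the pole structure. The remainder is the routine but essential bookkeeping of confirming that the finitely many $1+o(1)$ factors neither manufacture growth nor produce spurious cancellation, which I expect to be the most tedious yet least risky step of the argument.
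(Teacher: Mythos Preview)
Your proposal is correct and follows essentially the same route as the paper: substitute the asymptotic expansion of Theorem~\ref{casorati} into the ratios of Lemma~\ref{vgggg}, read off the surviving factor $(r_{j+1}/r_j)^n$ in the case $i\le M-1$ and the full cancellation of exponentials in the case $i=M$, and identify $\hat c_j = c_{M,j+1}c_{0,j}/(c_{M,j}c_{0,j+1})$. Your explicit remark that Theorem~\ref{casorati} is applicable because \eref{aconditiond} forces the common-pole hypothesis, and your treatment of the degenerate orders $j=0,1$, are minor refinements but not departures from the paper's argument.
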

 \begin{proof}
From Theorem \ref{casorati} and Lemma \ref{vgggg}, we derive, as $n\rightarrow\infty$, 
 \begin{eqnarray*}
 v_{i+j(M+1)}^{(n)}=&
\frac{c_{0,j+1}c_{i+1,j-1}}{c_{0,j} c_{i+1,j}}
\left(\frac{r_{j+1}}{r_j}\right)^n\\
&\qquad\times
\frac{\displaystyle
\left(1+\sum_{\ell=1}^{j+1}
\Or\left(\left(\frac{\rho_{\ell-1}}{|r_{j+1}|}\right)^{n}\right)\right)
\left(1+\sum_{\ell=1}^{j-1}
\Or\left(\left(\frac{\rho_{i+\ell}}{|r_{j-1}|}\right)^{n} \right)\right)}
{\displaystyle
\left(1+\sum_{\ell=1}^{j}
\Or\left(\left(\frac{\rho_{\ell-1}}{|r_{j}|}\right)^{n} \right)\right)
\left(1+\sum_{\ell=1}^{j}
\Or\left(\left(\frac{\rho_{i+\ell}}{|r_{j}|}\right)^{n} \right)\right)},\\
&\qquad\qquad i=0,1,\cdots,M-1.
 \end{eqnarray*}
Thus, by taking account that $|r_j|>|r_{j+1}|$, we see that
$v_{j(M+1)}^{(n)}\rightarrow0$, 
$v_{1+j(M+1)}^{(n)}\rightarrow0$, 
$\dots$, 
$v_{M-1+j(M+1)}^{(n)}\rightarrow0$
as $n\rightarrow\infty$.
Similarly, it follows that 
 \begin{eqnarray*}
\fl
v_{M+j(M+1)}^{(n)}=
\frac{c_{M,j+1}c_{0,j}}{c_{M,j}c_{0,j+1}}
\frac{\displaystyle
\left(1+\sum_{\ell=1}^{j+1}
\Or\left(\left(\frac{\rho_{M+\ell-1}}{|r_{j+1}|}\right)^{n} \right)\right)
\left(1+\sum_{\ell=1}^{j}
\Or\left(\left(\frac{\rho_{\ell-1}}{|r_{j}|}\right)^{n} \right)\right)}
{\displaystyle
\left(1+\sum_{\ell=1}^{j}
\Or\left(\left(\frac{\rho_{M+\ell-1}}{|r_{j}|}\right)^{n} \right)\right)
\left(1+\sum_{\ell=1}^{j+1}
\Or\left(\left(\frac{\rho_{\ell-1}}{|r_{j+1}|}\right)^{n} \right)\right)},
\end{eqnarray*}
which implies that $v_{M+j(M+1)}\rightarrow\hat{c}_j:=c_{M,j+1}c_{0,j}/(c_{M,j}c_{0,j+1})$
as $n\rightarrow \infty$.
\end{proof}
By recalling the relationship of  the dhLV variable $u_k^{(n)}$ 
to the auxiliary variable $v_k^{(n)}$ in \eref{vu},
we have the following theorem 
concerning an asymptotic convergence of $u_k^{(n)}$ as $n\rightarrow\infty$. 
\begin{theorem}\label{dhlvconv}
The dhLV variable $u_{j(M+1)}^{(n)}$ converges to some nonzero constant $\bar{c}_{j}$, 
and $u_{1+j(M+1)-M}^{(n)}, u_{2+j(M+1)-M}^{(n)},\dots,u_{M+j(M+1)}^{(n)}$ go to $0$ 
as $n\rightarrow\infty$, provided that the limit of $\delta^{(n)}$ as $n\rightarrow\infty$ exists.
\end{theorem}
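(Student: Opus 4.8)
The plan is to invert the defining relation \eref{vu} for the auxiliary variable and to run a forward induction on the species index, feeding in the limits of $v_k^{(n)}$ supplied by Theorem \ref{vcr}. Writing $p=k-M$, the relation \eref{vu} reads
\begin{eqnarray*}
v_{p+M}^{(n)}=u_{p}^{(n)}\prod_{s=1}^{M}\left(\delta^{(n)}+u_{p-s}^{(n)}\right),
\end{eqnarray*}
so that, as long as the product is nonzero,
\begin{eqnarray*}
u_{p}^{(n)}=\frac{v_{p+M}^{(n)}}{\prod_{s=1}^{M}\left(\delta^{(n)}+u_{p-s}^{(n)}\right)}.
\end{eqnarray*}
This expresses $u_p^{(n)}$ through $v_{p+M}^{(n)}$ and the $M$ preceding variables $u_{p-1}^{(n)},\dots,u_{p-M}^{(n)}$, which is exactly the shape needed for an induction starting from the boundary values $u_{-M}^{(n)}=\dots=u_{-1}^{(n)}=0$ in \eref{dhlv}.

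Next I would record from Theorem \ref{vcr} the dichotomy that $v_{p+M}^{(n)}\to\hat c$ (nonzero) when $p+M\equiv M$, i.e. $p\equiv0$, modulo $M+1$, and $v_{p+M}^{(n)}\to0$ otherwise. Setting $\delta^{(\infty)}:=\lim_{n\to\infty}\delta^{(n)}$, the base case is immediate: for $p=0$ the product collapses to $(\delta^{(n)})^{M}$ because all the $u_{-s}^{(n)}$ vanish, and since $v_{M}^{(n)}\to\hat c_0\neq0$ we get $u_0^{(n)}\to\hat c_0/(\delta^{(\infty)})^{M}\neq0$, consistent with the surviving case $p\equiv0$.

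For the inductive step I would assume the limits of $u_0^{(n)},\dots,u_{p-1}^{(n)}$ are known and split on the residue of $p$. When $p\equiv0$ modulo $M+1$, the indices $p-1,\dots,p-M$ are all $\not\equiv0$, so by the inductive hypothesis every $u_{p-s}^{(n)}\to0$, the denominator tends to $(\delta^{(\infty)})^{M}$, and $v_{p+M}^{(n)}\to\hat c\neq0$ forces $u_p^{(n)}$ to a nonzero limit; this yields the surviving variables $u_{j(M+1)}^{(n)}\to\bar c_j\neq0$. When $p\not\equiv0$, exactly one of $p-1,\dots,p-M$ is $\equiv0$, so the denominator tends to a product in which one factor is $\delta^{(\infty)}+\bar c$ and the remaining ones are $\delta^{(\infty)}$, while $v_{p+M}^{(n)}\to0$; dividing gives $u_p^{(n)}\to0$, i.e. the remaining variables vanish. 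Translating $p$ back to $k$ then reproduces the index ranges in the statement.

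The main obstacle is to guarantee that the denominator $\prod_{s=1}^{M}(\delta^{(n)}+u_{p-s}^{(n)})$ stays bounded away from $0$, since Theorem \ref{vcr} controls only the numerators. I would secure this by invoking the positivity built into the dhLV system (all $u_k^{(n)}\ge0$ and $\delta^{(n)}>0$), which makes every factor at least $\delta^{(\infty)}$, so the limiting product is positive and all the divisions above are legitimate. This positivity is also what really underlies the bare hypothesis that $\lim_{n\to\infty}\delta^{(n)}$ exists, since the value $\delta^{(\infty)}=0$ would let the surviving $u_{j(M+1)}^{(n)}$ blow up rather than converge; so the argument implicitly uses $\delta^{(\infty)}>0$.
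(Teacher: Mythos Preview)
Your argument is essentially the paper's: both invert \eref{vu} to
\[
u_{p}^{(n)}=\frac{v_{p+M}^{(n)}}{\prod_{s=1}^{M}(\delta^{(n)}+u_{p-s}^{(n)})},
\]
start from the boundary values $u_{-M}^{(n)}=\cdots=u_{-1}^{(n)}=0$, and run a forward induction feeding in the limits from Theorem~\ref{vcr}; the paper indexes the induction by the block $j$ while you index by the species $p$, which is a cosmetic difference.

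The one substantive divergence is your handling of the denominator. The paper does \emph{not} appeal to positivity of $u_k^{(n)}$ or of $\delta^{(n)}$; indeed, immediately after the proof it stresses that Theorem~\ref{dhlvconv} is meant to go beyond the positive case treated in \cite{fukuda} and to cover variable, possibly negative $\delta^{(n)}$. The paper simply computes the limiting denominators as $\delta^{M}$ (in the surviving case) and $\delta^{M-1}(\delta+\bar c_{j})$ (in the vanishing case) and tacitly treats these as nonzero. So your identification of the obstacle is apt, but your resolution via positivity is more restrictive than what the paper intends; to match the paper you would simply assume, as it implicitly does, that $\delta:=\lim_{n\to\infty}\delta^{(n)}\neq0$ and $\delta+\bar c_{j}\neq0$, rather than importing sign conditions on the variables.
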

\begin{proof}
The proof is given by induction for $j$. Without loss of generality,
let us assume that $\lim_{n\rightarrow\infty}\delta^{(n)}=\delta$ 
where $\delta$ denotes some constant.
From \eref{vu}, it holds that
\begin{eqnarray}\label{vkn}
u_k^{(n)}=\frac{v_{k+M}^{(n)}}
{\displaystyle\prod_{\ell=1}^{M}(\delta^{(n)}+u_{k-\ell}^{(n)})}.
\end{eqnarray}
By taking the limit as $n\rightarrow\infty$ of the both hand side in \eref{vkn} with $k=0$ and 
by using $v_M^{(n)}\rightarrow\hat{c}_0$ as $n\rightarrow\infty$, 
we get
\begin{eqnarray}\label{u1}
 \lim_{n\rightarrow\infty}u_{0}^{(n)}
&=\bar{c}_0,
\end{eqnarray}
where $\bar{c}_0=\hat{c}_{0}/\delta^{M}$. 
By taking account of Theorem \ref{vcr} with \eref{u1} 
in the case where $k=1,2,\dots,M$ in \eref{vkn},
we successively check that 
$u_1^{(n)}\rightarrow 0, u_2^{(n)}\rightarrow 0,\dots,u_{M}^{(n)}\rightarrow 0$
as $n\rightarrow\infty$.\par
Let us assume that $u_{j(M+1)}^{(n)}\rightarrow \bar{c}_j$
 and $u_{1+j(M+1)}^{(n)}\rightarrow0, u_{2+j(M+1)}^{(n)}\rightarrow0,\dots,u_{M+j(M+1)}^{(n)}\rightarrow0$ 
as $n\rightarrow\infty$.
Equation \eref{vkn} with $k=(j+1)(M+1)$ becomes
\begin{eqnarray}\label{u3m}
u_{(j+1)(M+1)}^{(n)}&=\frac{v_{M+(j+1)(M+1)}^{(n)}}
{\displaystyle\prod_{\ell=1}^{M}(\delta^{(n)}+u_{(j+1)(M+1)-\ell}^{(n)})}.
\end{eqnarray}
It is obvious that denominator on the  right hand side of \eref{u3m}
converges to $\delta^{M}$ as $n\rightarrow\infty$ under the assumption.
By combining it with $v_{M+(j+1)(M+1)}^{(n)}\rightarrow\hat{c}_{j+1}$
as $n\rightarrow\infty$, we observe that 
$u_{(j+1)(M+1)}^{(n)}\rightarrow\bar{c}_{j+1}=\hat{c}_{j+1}/\delta^{M}$
as $n\rightarrow\infty$.
Moreover, it follows that
\begin{eqnarray}\label{u3m1}
&\lim_{n\rightarrow\infty}u_{i+(j+1)(M+1)+1}^{(n)}
=\lim_{n\rightarrow\infty}\frac{v_{i+(j+2)(M+1)}^{(n)}}
{\displaystyle\prod_{\ell=1}^{M}(\delta^{(n)}+u_{i+(j+1)(M+1)+
1-\ell}^{(n)})}=0,\nonumber\\
&\qquad i=0,1,\dots,M-1,
\end{eqnarray}
since $\prod_{\ell=1}^{M}(\delta^{(n)}+u_{i+(j+1)(M+1)+
1-\ell}^{(n)})\rightarrow\delta^{M-1}(\delta+\bar{c}_{j+1})$
and $v_{i+(j+2)(M+1)}^{(n)}\rightarrow0$ as $n\rightarrow\infty$.
\end{proof}
The convergence theorem concerning the dhLV system \eref{dhlv} in \cite{fukuda} is restricted 
in the case where 
the dhLV variable $u_k^{(n)}$ is positive
and the discretization parameter $\delta^{(n)}$ is fixed
positive for every $n$. Theorem \ref{dhlvconv} clams that
the $[j(M+1)]$th species survives 
and the $[1+j(M+1)]$th, the $[2+j(M+1)]$th, \dots, the $[M+j(M+1)]$th 
species vanish 
as $n\rightarrow\infty$ even in the case 
where $\delta^{(n)}$ is variable negative for every $n$.
Though the case of negative $u_k^{(n)}$
is not longer recognized as a biological model,
we also realize that the convergence 
is not different from the positive case.

\section{Concluding remark}
In this paper, 
we associate a formal power series 
$f_k(z)=\sum_{n=0}^{\infty}a_k^{(n)}z^n$ 
with the Casorati determinant, 
and then give an asymptotic expansion of the Casorati determinant 
as $n\rightarrow\infty$ in Theorem\ref{casorati}.
As an application of Theorem\ref{casorati}, 
we also clarify an asymptotic behavior of the dhLV variables 
as $\rightarrow\infty$ in Theorem\ref{dhlvconv}.\par
Theorem\ref{casorati} will contribute to 
asymptotic analysis for another discrete integrable systems.
An example is the discrete hungry Toda (dhToda) equation
which is derived from the numbered box and ball system
through inverse ultra-discretization \cite{tokihiro}.
A special solution to the dhToda equation is shown
in \cite{tokihiro} to be written by using the Hankel determinant.
The solution with Casorati determinant 
is expected as more generalized solution,
since the dhToda equation has a relationship 
of variables to the dhLV system whose solution 
is given in the Casorati determinant \cite{fukuda3}.
The Casorati determinant directly appear in, for example,
the solution to the discrete Darboux-P\"oschl-Teller equation which
is a discretization of the dynamical system concerning a special
class of potentials for 
1-dimensional Schr\"odinger equation \cite{gaillard}.\par
It is proved in \cite{fukuda2} that the dhLV system \eref{dhlv} 
with fixed positive $\delta^{(n)}$ is associated 
with the $LR$ transformation 
for a totally nonnegative matrix.
The paper \cite{yamamoto} also suggests that 
the dhLV system \eref{dhlv} 
with variable negative $\delta^{(n)}$ 
generates the sifted $LR$ transformation.
Theorem\ref{dhlvconv} will be useful 
for investigating the convergence of the shifted $LR$ transformation 
based on the dhLV system \eref{dhlv}
in the variable negative case of $\delta^{(n)}$.\par

\section*{Reference}


\begin{thebibliography}{99}
\bibitem{bogoyavlensky}
Bogoyavlensky, O.I., 1988
Integrable discretizations of the KdV equation,
\textit{Phys. Lett. A}, \textbf{134}, 34--38. 

\bibitem{fukuda}
Fukuda A, Ishiwata E, Iwasaki M and Nakamura Y, 2009
The discrete hungry Lotka-Volterra system and a new algorithm for computing matrix eigenvalues, 
\textit{Inverse Probl.}, \textbf{25}, 015007.

\bibitem{fukuda2}
Fukuda A, Ishiwata E, Yamamoto Y, Iwasaki M and Nakamura Y, 2013
Integrable discrete hungry system and their related matrix eigenvalues,
\textit{Ann. Mat. Pura Appl.}, \textbf{192}, 423--445.

\bibitem{fukuda3}
Fukuda A, Yamamoto Y, Iwasaki M, Ishiwata E and Nakamura Y,
2011
A Backlund transformation
between two integrable discrete hungry systems,
\textit{Phys. Lett. A}, \textbf{375},
303--308.

\bibitem{gaillard}
Gaillard P and Matveev V B,
2009
Wronskian and Casorati determinant representations
for Darboux-P\"oschl-Teller potentials and their
difference extensions,
\textit{J. Phys. A}, \textbf{42},
404009

\bibitem{henrici}
Henrici P, 1988
\textit{Applied and Computational Complex Analysis Vol.1},
Wiley, New York, 591--608.

\bibitem{hirota}
Hirota R, 1981
Discrete analogue of a generalized Toda equation,
\textit{J. Phys. Soc. Jpn.}, \textbf{50}, 3785--3791.

\bibitem{itoh}
Itoh, Y, 1987
Integrals of a Lotka-Volterra system of odd number of variables,
\textit{Prog. Theor. Phys.}, \textbf{78},
507--510.

\bibitem{iwasaki}
Iwasaki M and Nakamura Y, 2002
On the convergence of a solution of the discrete Lotka-Volterra system, 
\textit{Inverse Probl.}, \textbf{18},
1569--1578.

\bibitem{parlett}
Parlett B N, 1995
The new qd algorithm,
\textit{Acta Numer.}, \textbf{4}, 459--491.

\bibitem{rutishauser}
Rutishauser H, 1990
\textit{Lectures on Numerical Mathematics},
Birkh\"{a}user, Buston.

\bibitem{tokihiro}
Tokihiro T, Nagai A and Satsuma J, 1999
Proof of solitonical nature of box and ball systems by means of
inverse ultra-discretization,
\textit{Inverse Probl.}, \textbf{15},
1639--1662.

\bibitem{tsuji}
Tsujimoto S, Nakamura Y and Iwasaki M, 2001
The discrete Lotka-Volterra system computes singular values,
\textit{Inverse Probl.}, \textbf{17},
53--58.

\bibitem{tsujimoto}
Tsujimoto S and Kondo K, 2000
Molecule solutions to discrete equations and orthogonal polynomials (in Japanese),
Surikaisekikenkyusho Kokyuroku, \textbf{1170}, 1--8.

\bibitem{casoratibook}
Vein R and Dale P, 1999
\textit{Determinants and Their Applications in Mathematical Physics},
Applied Mathematical Sciences, \textbf{134},
Springer, New York. 169

\bibitem{yamamoto}
Yamamoto Y, Fukuda A, Iwasaki M, Ishiwata E and nakamura Y, 2010
On a variable transformation between two integrable systems:
the discrete hungry Toda equation and the discrete hungry Lotka-Volterra system,
\textit{AIP Conf. Proc.}, \textbf{1281}, 2045--2048.

\end{thebibliography}
\end{document}